\newcommand{\Tr}{{\rm Tr}}
\theoremstyle{definition}
\newtheorem{definition}{Definition}
\newtheorem{theorem}{Theorem}
\begin{document}
\title{Composite Quantum Phases in Non-Hermitian Systems}
\author{Yuchen Guo}
\thanks{These authors contributed equally.}
\affiliation{State Key Laboratory of Low Dimensional Quantum Physics and Department of Physics, Tsinghua University, Beijing 100084, China}
\author{Ruohan Shen}
\thanks{These authors contributed equally.}
\affiliation{State Key Laboratory of Low Dimensional Quantum Physics and Department of Physics, Tsinghua University, Beijing 100084, China}
\author{Shuo Yang}
\email{shuoyang@tsinghua.edu.cn}
\affiliation{State Key Laboratory of Low Dimensional Quantum Physics and Department of Physics, Tsinghua University, Beijing 100084, China}
\affiliation{Frontier Science Center for Quantum Information, Beijing 100084, China}
\affiliation{Hefei National Laboratory, Hefei 230088, China}

\begin{abstract}
    Non-Hermitian systems have attracted considerable interest in recent years owing to their unique topological properties that are absent in Hermitian systems. 
    While such properties have been thoroughly characterized in free fermion models, they remain an open question for interacting bosonic systems.
    In this work, we present a precise definition of quantum phases for non-Hermitian systems and propose a new family of phases referred to as composite quantum phases.
    We demonstrate the existence of these phases in a one-dimensional spin-$1$ system and show their robustness against perturbations through numerical simulations.
    Furthermore, we investigate the phase diagram of our model, indicating the extensive presence of these new phases in non-Hermitian systems.
    Our work establishes a new framework for studying and constructing quantum phases in non-Hermitian interacting systems, revealing exciting possibilities beyond the single-particle picture.
\end{abstract}
\maketitle

\section{Introduction}
Non-Hermitian systems~\cite{Bender2002, Bender2007}, originally proposed as effective theories to describe open systems~\cite{Konotop2016, Vega2017, El2018, Miri2019}, have received significant attention recently due to their unique properties and phenomena beyond the standard Hermitian formalism~\cite{Heiss1990, Peng2015, Shen2018, Kunst2018, Song2019A, Matsumoto2020, Borgnia2020}.
Various studies have focused on the topological properties in free fermion models~\cite{Zeuner2015, Zeng2020}, including the breakdown of the well-known bulk-edge correspondence~\cite{Xiong2018}, inspiring the revisitation of the relationship between bulk topological invariants and edge states in non-Hermitian systems~\cite{Yao2018A, Yao2018B, Song2019B}.
The celebrated Altland-Zirnbauer symmetry classification~\cite{Altland1997} has also been extended from ten to 38 classes by considering additional sublattice symmetries and pseudo-Hermiticity, revealing a much richer phase diagram in non-Hermitian fermionic systems~\cite{Gong2018, Kawabata2019A}.

Topological quantum phases have been extensively investigated in Hermitian interacting bosonic systems besides free fermion models, with a focus on their well-organized entanglement structure.
This long-range entanglement pattern is commonly called topological order~\cite{Levin2006, Kitaev2006, Chen2010}.
In addition, the manifold of symmetric Hamiltonians gives rise to more non-trivial quantum phases, including symmetry-breaking phases~\cite{Landau1950} and symmetry-protected topological (SPT) phases~\cite{Gu2009, Chen2011A, Chen2013}.

The intersection between non-Hermitian physics and many-body physics would be of particular interest.
Previous works have made much progress in this direction, including studies on topological excitations~\cite{Chen2023} or dynamics~\cite{Zhang2020, Turkeshi2023} in interacting spin models~\cite{CastroAlvaredo2009} and observation of non-Hermitian skin effects~\cite{Mu2020, Alsallom2022, Kawabata2023} or many-body localization~\cite{Hamazaki2019, Zhai2020, Wang2023} in interacting fermionic systems.
However, a comprehensive understanding and classification of quantum phases in non-Hermitian interacting systems, i.e., strongly correlated phases at zero temperature, has not been well established.

Here we propose a new definition of quantum phases in non-Hermitian systems by starting from the equivalent classes of Hamiltonians.
With this definition, we demonstrate a broad range of novel non-Hermitian quantum phases without Hermitian counterparts, which we denote as composite quantum phases.
As an illustration, we employ the non-Hermitian parent Hamiltonian method~\cite{Shen2023} to construct a system belonging to this type of new phase and numerically confirm its robustness against perturbations using the multisite infinite time-evolving block decimation (iTEBD) method~\cite{Hastings2009, Shen2023}.
Our results suggest that non-Hermitian composite quantum phases are prevalent, as evidenced by the phase diagram of our model, indicating the existence of a vast and unexplored landscape of non-Hermitian topological phases beyond existing free fermion models.

\section{Quantum phases and quantum phase transitions}
\subsection{Three equivalent criteria in Hermitian systems}
Quantum phases are defined as the equivalent classes of Hamiltonians.
\begin{definition}
    Two local and gapped Hamiltonians in Hermitian systems $H_0$ and $H_1$ belong to the same phase if and only if there exists a set of $H(g)$ connecting them, i.e., $H(0) = H_0$ and $H(1) = H_1$, such that the expectation value of any local observable for the ground state $\braket{O}(g)$ is smooth along the path $g\in[0, 1]$.
\end{definition}
This serves as the original definition for quantum phases and quantum phase transitions.
Thanks to the perturbation theory in Hermitian systems, the existence of an adiabatic path such that $H(g)$ is gapped automatically ensures that $\braket{O}(g)$ is smooth~\cite{Chen2010}, providing another criterion that relates phase transitions to gap closing.

Nevertheless, whether such an adiabatic path exists is generally hard to determine, inspiring people to study the equivalent classes of ground states rather than Hamiltonians.
It is proved by explicit constructions that the above condition is equivalent to the existence of a local unitary (LU) evolution connecting the respective ground states.
This sufficient and necessary condition enables the classification of quantum phases by analyzing ground state properties, such as the entanglement spectrum (ES)~\cite{Li2008, Pollmann2010} or topological entanglement entropy~\cite{Levin2006, Kitaev2006}.
The relation between the classification of Hamiltonians and the classification of states in Hermitian systems is shown in Fig.~\ref{Fig: Scheme}.

\subsection{Non-Hermitian systems}
\begin{figure*}
    \centering
    \includegraphics[width=0.7\linewidth]{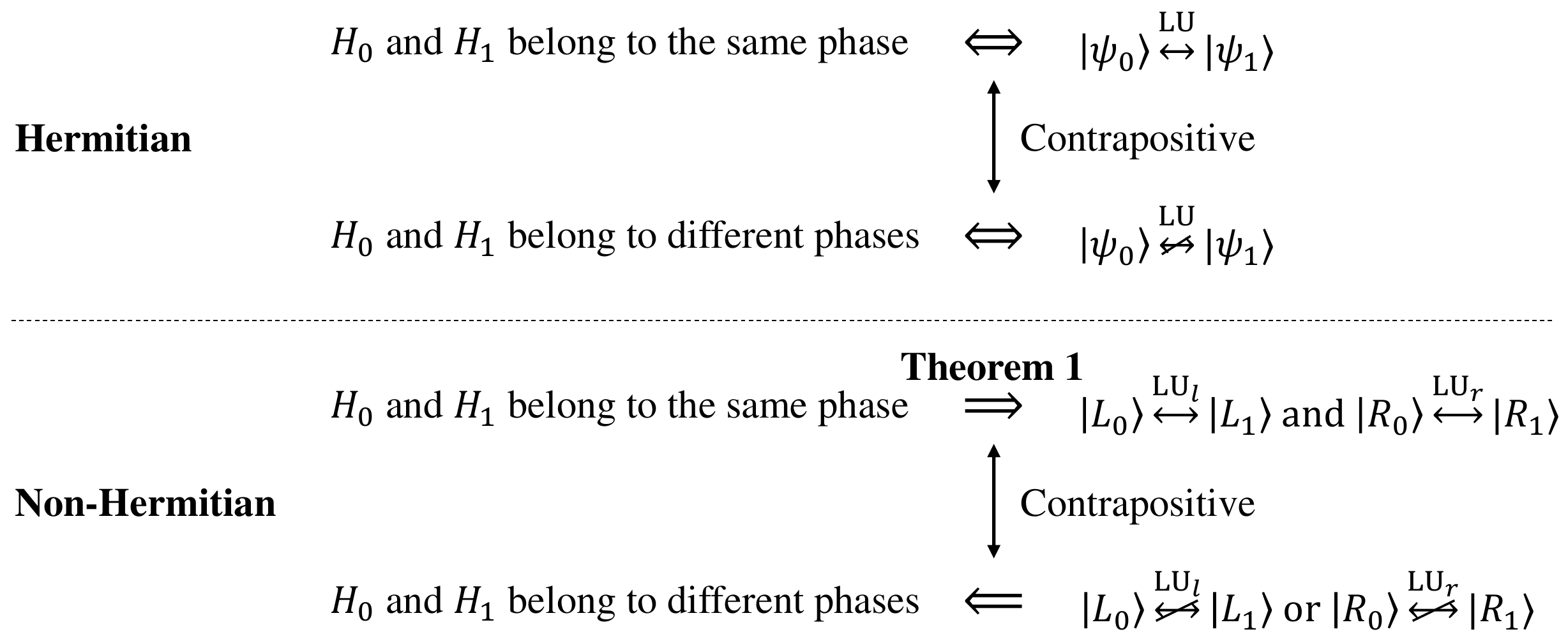}
    \caption{Relations between the classification of Hamiltonians and the classification of quantum states.
    In Hermitian systems, they are equivalent.
    In non-Hermitian systems, Theorem \ref{The: LU} connects the original definition based on smooth observables to the existence of LU evolutions on the corresponding left and right ground states.}
    \label{Fig: Scheme}
\end{figure*}

It is worth noting that in Hermitian systems, 1) the classification of Hamiltonians and 2) the classification of ground states are equivalent, greatly facilitating the exploration and construction of novel quantum phases.
In contrast, in the non-Hermitian regime, the second and third criteria mentioned above and thus the duality between quantum states and Hamiltonians no longer hold.
Therefore, recent studies showing that no new topological ground state can be realized in one-dimensional (1D) interacting non-Hermitian systems~\cite{Xi2021} do not preclude the possibility of investigating new phases of non-Hermitian Hamiltonians, which remains an open problem.

To classify quantum phases in non-Hermitian systems, we adopt the original definition in Hermitian systems, i.e., quantum phases are defined as the equivalent classes of Hamiltonians, where a smooth path for all expectation values is crucial.
However, there are different ways to define the density matrix and evaluate the expectation value of an observable $\braket{O}$ for a general non-Hermitian system~\cite{Brody2013, Lee2020, Grimaudo2020, Grimaldi2021}, each with a self-consistent physical interpretation.
In this work, we make use of the formalism discussed in Refs.~\cite{Brody2013, Herviou2019} based on the bi-orthogonal interpretation of non-Hermitian quantum mechanics, which is to be explained in detail below.

According to the basic formalism in general quantum statistics, the natural choice of the density matrix is $\rho=e^{-\beta H}/\Tr{[e^{-\beta H}]}$.
The expectation value of any observable $\braket{O}$ can then be obtained by
\begin{align}
    \braket{O}
    = \frac{\mathrm{Tr}\left[e^{-\beta H}O\right]}{\mathrm{Tr}\left[e^{-\beta H}\right]}
    = \frac{\sum_{n}{\bra{\psi_n}e^{-\frac{\beta}{2} H}Oe^{-\frac{\beta}{2} H}\ket{\psi_n}}}{\sum_{n}{\bra{\psi_n}e^{-\frac{\beta}{2} H}e^{-\frac{\beta}{2} H}\ket{\psi_n}}}.
\end{align}
As the temperature approaches zero, we reach
\begin{align}
    \braket{O}_{LR} = \frac{\bra{L}O\ket{R}}{\braket{L|R}},
\end{align}
where $\ket{R}$ and $\ket{L}$ are the ground states of $H$ and $H^{\dagger}$ respectively, defined as the eigenstates with the lowest real parts of eigenvalues.
As a result, one can also identify the density matrix for ground states as $\rho = \ket{R}\hspace{-1mm}\bra{L}$ with proper normalization $\braket{L|R} = 1$.
Notably, such a choice for the density matrix has a clear geometric interpretation~\cite{Ju2019} and allows for a natural generalization of the definition for quantum phases from Hermitian systems to the non-Hermitian regime as follows.
\begin{definition}
    Two local, line-gapped~\cite{Kawabata2019B}, non-Hermitian Hamiltonians $H_0$ and $H_1$ belong to the same quantum phase if and only if there exists a set of $H(g)$ connecting them, i.e., $H(0) = H_0$ and $H(1) = H_1$, such that all local observables for the ground states $\Tr{[\rho(g) O]}$, where $\rho(g) = \ket{R(g)}\hspace{-1mm}\bra{L(g)}$ is the density matrix for ground states after normalization $\Tr{[\rho(g)]} = 1$, are smooth along the path $g\in[0, 1]$.
\end{definition}

The first issue we encounter is whether our generalized definition is consistent with the conventional definition of quantum phases for Hermitian Hamiltonians.
From the most basic topological point of view, two regions that are not originally connected may be connected on an extended manifold.
Therefore, we need to answer the following question.
For two Hermitian Hamiltonians $H_0$ and $H_1$ belonging to different quantum phases defined conventionally, can we connect them without phase transitions in the extended manifold of non-Hermitian Hamiltonians?
We will show that if restricted to a special class of systems whose ground states are guaranteed to be short-range correlated and satisfy the entanglement area law, we would not encounter conflicts.
It is noteworthy that it is a condition automatically satisfied in Hermitian systems, but not necessarily in a general non-Hermitian system even with a line gap.

In addition, due to the breakdown of the Lieb-Robinson bound, quantum phase transitions can occur without gap closing in non-Hermitian systems~\cite{Matsumoto2020}.
In other words, a finite gap along the path can no longer guarantee that two Hamiltonians belong to the same phase, hindering us from constructing LU evolution on corresponding ground states and deriving an equivalent criterion as in Hermitian systems.
Therefore, we need to consider how to provide a classification of non-Hermitian quantum phases via another easily implemented criterion.

The following theorem can answer the above two questions.
\begin{theorem}
    For two local, line-gapped, non-Hermitian Hamiltonians $H_0$ and $H_1$ whose ground states are short-range correlated and satisfy the entanglement area law, if they belong to the same quantum phase, their left and right ground states can be connected with LU evolutions respectively, i.e., $\ket{L_0}\hspace{-1mm}\stackrel{\mathrm{LU}_l}{\leftrightarrow}\hspace{-1mm}\ket{L_1}$ and $\ket{R_0}\hspace{-1mm}\stackrel{\mathrm{LU}_r}{\leftrightarrow}\hspace{-1mm}\ket{R_1}$.
    \label{The: LU}
\end{theorem}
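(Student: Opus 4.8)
The plan is to run, twice in parallel, the Hermitian argument that membership in a common phase forces local-unitary (LU) equivalent ground states --- once for the right ground states and once for the left ones --- using as a bridge the tensor-network representation guaranteed by the area law (matrix product states, MPS, in one dimension). This route is essentially unavoidable: the standard argument via quasi-adiabatic continuation on $H(g)$ fails here, since without a Lieb--Robinson bound the quasi-adiabatic generator need not be quasi-local, which is precisely why the classification of states and that of Hamiltonians decouple in the non-Hermitian setting.

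First I would unpack the hypothesis. The assumption that $H_0$ and $H_1$ lie in the same phase supplies a path $H(g)$ along which $\Tr[\rho(g)O]$ is smooth for every local observable $O$; taking $O=(\ket{j}\hspace{-1mm}\bra{i})_A\otimes I_{\bar A}$ shows that each matrix element of the bi-orthogonal reduced object $\sigma_A(g):=\Tr_{\bar A}[\rho(g)]=\Tr_{\bar A}[\ket{R(g)}\hspace{-1mm}\bra{L(g)}]$ is smooth in $g$, for every finite region $A$. The line gap additionally lets one choose $\ket{R(g)}$ and $\ket{L(g)}$ as smooth normalized vector families (the Riesz projector of the isolated ground eigenvalue varies smoothly), so the one-sided reduced density matrices $\rho^R_A(g)=\Tr_{\bar A}[\ket{R(g)}\hspace{-1mm}\bra{R(g)}]$ and $\rho^L_A(g)$ are smooth as well. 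Second, using that $\ket{R(g)}$ and $\ket{L(g)}$ are short-range correlated and obey the area law, I would write each as an injective MPS with bond dimension $D$ bounded uniformly along the path, short-range correlation being exactly the statement that the relevant transfer operator --- the ordinary one for $\ket{R(g)}$, or the bi-orthogonal $\mathbb{E}(g)=\sum_s A^R_s(g)\otimes\overline{A^L_s(g)}$ --- has an isolated, non-degenerate leading eigenvalue.

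The crux is to upgrade smoothness of the reduced data to smoothness of the MPS tensors. For $A$ a long interval, the smooth family $\rho^R_A(g)$ (respectively $\rho^L_A(g)$) determines $A^R_s(g)$ (resp.\ $A^L_s(g)$) up to gauge by the fundamental theorem of MPS; using the transfer-matrix gap to extract the leading eigenvectors of the transfer operator smoothly (their spectral projector is smooth by the gap), one fixes a canonical-form gauge that varies smoothly in $g$, and the gauge-fixed tensors then inherit the smoothness of the reduced data. Finally, a smooth path of injective MPS of bounded bond dimension can be disentangled to a product state by a finite-depth circuit that depends smoothly on the tensors --- an exact algebraic construction needing no Lieb--Robinson bound \cite{Chen2010}; absorbing the residual $g$-dependence of the product state into a single-site rotation, one obtains a disentangler $\tilde C(g)$ with $\tilde C(g)\ket{R(g)}=\ket{\omega_0}$ for a fixed $\ket{\omega_0}$, whence $\ket{R_1}=\tilde C(1)^{-1}\tilde C(0)\ket{R_0}$ with $U_r:=\tilde C(1)^{-1}\tilde C(0)$ a finite-depth, hence LU, circuit. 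The identical argument on the left family gives $U_l$ with $\ket{L_1}=U_l\ket{L_0}$.

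I expect the main obstacle to be twofold. First, in the crux step one must genuinely verify that the canonical-form gauge fixing can be carried out smoothly, and --- for the bi-orthogonal version --- that $\sigma_A(g)$, which sees only the product of the left and right tensors, does not conceal a $g$-dependent gauge that is smooth on the product but on neither factor; the uniformly bounded bond dimension, the transfer-matrix gap, and the fundamental theorem of MPS are precisely what exclude this, which is also why the cleanest argument lives in one dimension, the projected-entangled-pair-state analogues of these tools being considerably more delicate. Second, and more conceptually, one must ensure the short-range-correlated, area-law regime persists along the \emph{whole} path $g\in[0,1]$ and not merely at its endpoints; this is where the two hypotheses truly earn their keep, since --- in sharp contrast to Hermitian systems, where a spectral gap forces clustering and, in one dimension, the area law --- a line gap alone guarantees neither. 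Within that regime the resulting LU criterion is exactly what separates the composite phases exhibited later in the paper, while outside it the implication can fail.
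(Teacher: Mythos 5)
Your first half tracks the paper's own proof: from the same-phase hypothesis you extract smoothness of the local data of $\ket{L(g)}$ and $\ket{R(g)}$ separately, and you use short-range correlation plus the area law to put each family into injective-MPS form; your Riesz-projector argument for why the one-sided reduced density matrices inherit smoothness is in fact a more careful justification of a step the paper only asserts. Where you diverge is the decisive second step. The paper converts the smooth local reduced density matrices into smooth families of local, \emph{gapped Hermitian} parent Hamiltonians $H_L(g)$ and $H_R(g)$ (smooth because the local support-space dimensions do not change along the path), and then invokes the standard Hermitian result of Ref.~\cite{Chen2010} that a gapped adiabatic Hermitian path yields an LU evolution. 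Note that quasi-adiabatic continuation and Lieb--Robinson bounds are perfectly legitimate at that stage: the only non-Hermitian object is the original $H(g)$; the auxiliary parent Hamiltonians are Hermitian, so there is nothing to avoid.

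The genuine gap is your claimed disentangler. A strictly finite-depth circuit maps a product state to a state whose connected correlations vanish \emph{exactly} beyond the circuit's light cone, so a generic injective MPS, whose correlations decay exponentially but are nonzero at all distances, cannot be exactly mapped to a product state by a finite-depth circuit; the constructions in Ref.~\cite{Chen2010} are either RG-based (not constant depth, ``generalized'' LU) or quasi-adiabatic continuation on a parent Hamiltonian, so the ``exact algebraic construction needing no Lieb--Robinson bound'' does not exist as stated, and the motivation for avoiding it is moot. Moreover, once both endpoints are disentangled to one fixed product state, the smoothness of the path does no work at all (any two injective MPS are LU-equivalent absent symmetry), so your route, even if repaired, trivializes the statement in 1D and does not survive the symmetric-LU refinement under which the paper actually deploys Theorem~\ref{The: LU} to separate CSPT phases; the parent-Hamiltonian route does survive it, since a symmetric smooth gapped Hermitian path yields a symmetric LU. The natural fix is to graft the paper's step onto your setup: from your smoothly gauge-fixed tensors build the smooth, local, gapped Hermitian parent Hamiltonians for the left and right families and let the Hermitian machinery finish the argument.
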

\begin{proof}
    Consider the adiabatic path $H(g)$ connecting these two Hamiltonians, i.e., $H(0) = H_0$ and $H(1) = H_1$, where the smoothness of $\Tr{[\ket{R(g)}\hspace{-1mm}\bra{L(g)}O]}$ for any local operator requires all the local reduced density matrices of $\ket{R(g)}\hspace{-1mm}\bra{L(g)}$ be smooth, which further means that those of $\ket{L(g)}$ and $\ket{R(g)}$ are also smooth.
    For each $g$, we can construct the Hermitian parent Hamiltonians $H_L(g)$ and $H_R(g)$ for $\ket{L(g)}$ and $\ket{R(g)}$, respectively, which are local and gapped since both $\ket{L(g)}$ and $\ket{R(g)}$ are short-range correlated and satisfy the entanglement area law~\cite{PerezGarcia2007,PerezGarcia2008}.
    In addition, each term of $H_{L}(g)$ $(H_{R}(g))$, constructed from the local reduced density matrix of $\ket{L(g)}$ $(\ket{R(g)})$, has a smooth dependence on $g$ since the dimension of the local support space does not change.
    Therefore, we obtain two adiabatic paths in the Hermitian Hamiltonian manifold connecting $\ket{L(0)}$ to $\ket{L(1)}$ and $\ket{R(0)}$ to $\ket{R(1)}$, respectively.
    Consequently, $\ket{L(0)}$ and $\ket{L(1)}$ ($\ket{R(0)}$ and $\ket{R(1)}$) can be connected with LU evolutions following the construction in Ref.~\cite{Chen2010}.
\end{proof}
The key step in this proof is to derive an LU evolution for each side by constructing accompanying Hermitian parent Hamiltonians.
From the contraposition of Theorem~\ref{The: LU}, it follows directly that for two Hamiltonians $H_0$ and $H_1$, once one side of their ground states cannot be connected by LU evolutions, they belong to different non-Hermitian phases.
Therefore, the classification of non-Hermitian quantum phases is given by the direct product of the equivalent classes of the left and right ground states.
The new relation in non-Hermitian systems is also shown in Fig.~\ref{Fig: Scheme}.

\section{New phases in non-Hermitian systems}
Following the definition in the previous section, we can propose new quantum phases in non-Hermitian systems.
Intuitively, if a Hermitian system has $n$ different phases, then a non-Hermitian system can potentially exhibit $n\times n$ different phases.
In this case, the $n$ ``diagonal'' phases have Hermitian counterparts, while the ``off-diagonal'' phases are new quantum phases arising only in non-Hermitian systems, where the left and right ground states $\ket{L}$ and $\ket{R}$ cannot be connected by LU evolutions.
Hamiltonians in these new phases, which we denote as composite quantum phases, cannot be adiabatically connected to any conventional Hermitian Hamiltonian.
The schematic diagram of the composite quantum phases is shown in Fig.~\ref{Fig: Composite}.
\begin{figure}
    \centering
    \includegraphics[width=\linewidth]{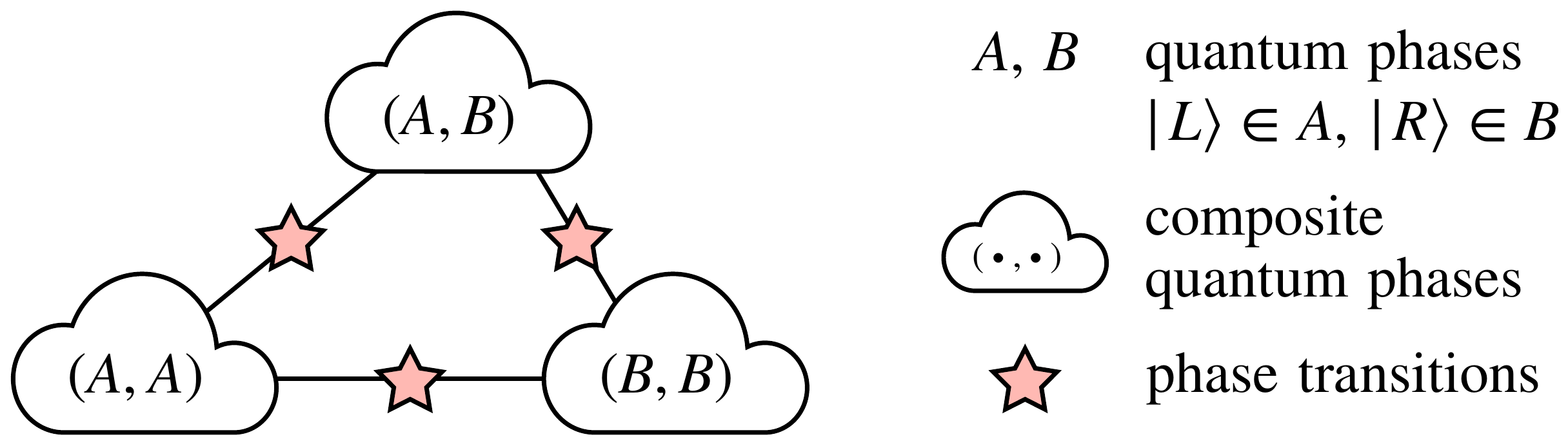}
    \caption{Schematic diagram of composite quantum phases in non-Hermitian systems.
    In this case, the left and right ground states can belong to different phases [e.g., $\ket{L}\in\text{phase }A$ while $\ket{R}\in\text{phase }B$], resulting in composite quantum phases (e.g., the cloud labeled as $(A,B)$).}
    \label{Fig: Composite}
\end{figure}

In addition, when considering symmetric LU evolution, quantum states with different nontrivial SPT orders cannot be transformed into each other~\cite{Gu2009, Chen2011A}.
Therefore, we can extend the above definition to define and study composite symmetry-protected topological (CSPT) orders in non-Hermitian systems.
For instance, in the presence of on-site unitary symmetry defined by a finite group $G$, the classification of SPT phases in $d$-dimensional Hermitian systems can be represented as $\omega \in H^{d+1}(G, \mathbb{C})$~\cite{Chen2011A, Chen2013}.
Consequently, we can use
\begin{align}
    \omega_{L}\times\omega_{R} \in H^{d+1}(G, \mathbb{C})\times H^{d+1}(G, \mathbb{C})
\end{align}
to label possible CSPT phases in non-Hermitian systems where on-site unitary symmetry $G$ is imposed.
In the presence of symmetries besides onsite unitary ones, such as time reversal (TR) or translational invariance (TI), we can further construct additional CSPT phases from states with SPT order that are protected by these joint symmetries~\cite{Liu2011, Chen2011B}.

A significant problem lies in the existence of such composite phases in the real world, i.e., whether we can construct a non-Hermitian parent Hamiltonian from given left and right ground states with different orders such that it remains gapped in the thermodynamic limit.
In Hermitian systems, the existence is guaranteed by the parent Hamiltonian method~\cite{PerezGarcia2007, PerezGarcia2008}.
In the following, we adopt the recently proposed non-Hermitian parent Hamiltonian (nH-PH) approach~\cite{Shen2023} to construct and study composite phases in one dimension.
In this method, one starts from two different matrix product states (MPS)~\cite{PerezGarcia2007} and constructs a local Hamiltonian such that they serve as the zero-energy mode on each side.
Since there is no intrinsic topological order for any injective MPS~\cite{Chen2010}, we will focus on two MPS with different SPT orders.

\section{CSPT with $D_{2h}$ symmetry}
We start from 1D quantum states with different SPT orders protected by the $D_{2h}$ symmetry group, which is a joint symmetry group composed of the dihedral group $D_2$ and the TR transformation $\mathcal{T}$~\cite{Liu2011}.
Different SPT orders are labeled by several indices $\omega, \beta(\mathcal{T}), \gamma(g)$ as defined below.

\subsection{SPT phases with combined sysmmetry}
\begin{figure*}
    \centering
    \includegraphics[width=0.7\linewidth]{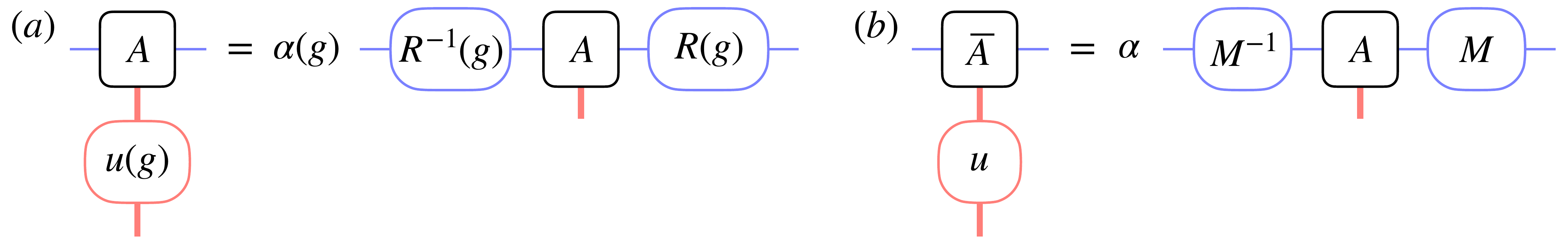}
    \caption{Transformation of the local tensor $A$ of a symmetric MPS under (a) onsite symmetry $g\in G$ and (b) time reversal symmetry $\mathcal{T} = uK$.}
    \label{Fig: S2}
\end{figure*}
Here we briefly review the definition and indices for different SPT orders in 1D Hermitian systems~\cite{Liu2011, Chen2011A, Chen2011B, Pollmann2012}.
Without loss of generality, we consider an MPS with TI, but we do not always impose TI for the LU evolution.
In this case, onsite unitary symmetry must act linearly, not projectively~\cite{Chen2011A}.
For an onsite unitary symmetry group $G$, we have
\begin{align}
    \sum_{j}{u(g)_{ij}A^{[j]}} = \alpha(g)R^{-1}(g)A^{[i]}R(g),\quad g\in G
\end{align}
as shown in Fig.~\ref{Fig: S2}(a), where $u(g)$, $\alpha(g)$, and $R(g)$ are a linear representation, a 1D representation, and a projective representation of $G$, respectively.
If only onsite symmetries are imposed, different SPT phases are labeled by $\omega\in H^2\left(G, \mathbb{C}\right)$, defined as
\begin{align}
R(g_z)^{-1} R(g_x)R(g_z) &= \omega R(g_x).
\end{align}

Then we turn to the onsite antiunitary symmetry, e.g., the TR symmetry $\mathcal{T} = uK$ satisfying $\mathcal{T}^2=u\overline{u} = \pm 1$, where $K$ refers to the complex conjugate.
The local tensor is now transformed as
\begin{align}
    \sum_{j}{u_{ij}\overline{A^{[j]}}} = \alpha M^{-1}A^{[i]}M,\label{equ: TR}
\end{align}
as shown in Fig.~\ref{Fig: S2}(b), where $M\overline{M} = \pm 1$. 
Therefore, different quantum phases can be labeled by $M\overline{M} = \pm 1\equiv \beta{(\mathcal{T})}$.

Now we consider the combination of onsite unitary symmetry group $G$ and time reversal symmetry $\mathcal{T}$.
In addition to the indices given by these two components respectively, i.e., $\beta(\mathcal{T}) = \pm 1$ and projective representation $\omega\in H^2(G, \mathbb{C})$, we need to consider the `projective' commutation relation between $G$ and $\mathcal{T}$.
Suppose all elements in $G$ commute with $\mathcal{T}$ (and thus the joint symmetry group is Abelian), one can derive the following relation
\begin{align}
    M^{-1} R(g)M &= \gamma(g)\overline{R(g)},\label{equ: G+TR}
\end{align}
where $\gamma(g)$ is a 1D representation of $G$.
Moreover, two $\gamma(g)$ are equivalent if they differ only by the square of a 1D representation of $G$.
This means that different SPT phases are further distinguished by $\gamma(g)\in \mathcal{G}/\mathcal{G}_2$, where $\mathcal{G}$ is the group of 1D representation of $G$, and $\mathcal{G}_2$ is the group of 1D representation squared of $G$~\cite{Chen2011B}.

\subsection{SPT phases with $D_{2h}$ symmetry in spin-$1$ systems}
As an example, we consider a spin-$1$ chain with $D_2$ symmetry, whose group elements are $\{e, g_x, g_z, g_y = g_xg_z = g_zg_x\}$.
The linear representation of this group applied to the physical bond is $u(g) = e^{-\mathrm{i}\pi s(g)}$, where $s(g_i) \equiv s_i$ for $i=x, y, z$, referring to conventional spin operators.
There is a nontrivial projective representation of $D_2$, labeled as $\omega = -1$, corresponding to the well-known Haldane phase.
A realization for this phase is the AKLT state, which can be represented by an MPS with bond dimension $D = 2$.
The local tensors are $A^{[x]} = X$, $A^{[y]} = Y$, and $A^{[z]} = Z$, where $X, Y, Z$ are the Pauli matrices~\cite{Zeng2019}.

As for the joint group of $D_2\times\mathcal{T}=D_{2h}$, one can realize four distinct SPT phases in spin-$1$ systems, whose MPS constructions with $D=2$ and all associated indices are shown as follows (all these four states have $R(g_x) = X$, $R(g_z) = Z$, and thus $\omega=-1$)~\cite{Liu2011}
\begin{align*}
    \begin{array}{ccccccccc}\hline\hline
    &A^{[x]} &A^{[y]} &A^{[z]} &M &\omega &\beta &\gamma(g_x) &\gamma(g_z)\\\hline
    \ket{\psi_0} &X &Y &Z &Y &-1 &-1 &-1 &-1\\
    \ket{\psi_x} &\mathrm{i}X &Y &Z &Z &-1 &+1 &-1 &+1\\
    \ket{\psi_y} &X &\mathrm{i}Y &Z &I &-1 &+1 &+1 &+1\\
    \ket{\psi_z} &X &Y &\mathrm{i}Z &X &-1 &+1 &+1 &-1\\\hline\hline
    \end{array}\label{equ: four-states}
\end{align*}
where we adopt $\mathcal{T} = e^{-\mathrm{i}\pi s_y}K$, following the conventional choice for spin systems.
It should be noted that these four states all belong to the Haldane phase if $D_2$ symmetry is the only constraint, where $\omega=\pm 1$ enables us to distinguish the Haldane phase from the trivial phase, while they can be further distinguished by different commutation relations between $g\in D_2$ and $\mathcal{T}$ [labeled $\gamma(g)$].

\subsection{CSPT phases with $D_{2h}$ symmetry in spin-$1$ systems}
We proceed to establish CSPT phases in non-Hermitian systems based on these states in the following.
We construct a Hermitian parent Hamiltonian $H_{00}$ to describe the ground state $\ket{\psi_0}\hspace{-1mm}\bra{\psi_0}$ and a non-Hermitian parent Hamiltonian $H_{x0}$ from $\ket{\psi_0}\hspace{-1mm}\bra{\psi_x}$~\cite{Shen2023}.
We choose $k=4$ in each construction, i.e., both $H_{00}$ and $H_{x0}$ involve four-site interaction.
Furthermore, both Hamiltonians preserve the $D_{2h}$ symmetry inherited from $\ket{\psi_0}$ and $\bra{\psi_x}$.
Thus, $H_{00}$ and $H_{x0}$ are expected to belong to different non-Hermitian phases.
To investigate the quantum phase transition between $H_{00}$ and $H_{x0}$, we consider the path of Hamiltonians given by $H_0(\lambda) = (1-\lambda)H_{00}+\lambda H_{x0}$ for $\lambda\in[0, 1]$.
We note that $\ket{\psi_0}$ is always a zero-energy eigenstate of $H_0(\lambda)$ for all $\lambda$ since it is a co-eigenstate of both $H_{00}$ and $H_{x0}$ with energy $E = 0$.

To compute the left and right ground states $\bra{L}$ and $\ket{R}$ of $H_0(\lambda)$, we employ the multisite iTEBD method~\cite{Shen2023} with bond dimension $D = 32$ and unit cell length $k=4$.
We set the time step as $\Delta\tau=1\times 10^{-2}$, and adopt the convergence criterion $e=1\times 10^{-12}$, defined as $e=\sum_{i=1}^{k}\sum_{j=1}^{D} \left[s_{ij}(\tau + \delta\tau) - s_{ij}(\tau)\right]^2$, where $s_{ij}$ denotes the $j$-th Schmidt weight for site $i$ in the unit cell.
The resulting entanglement spectra (ES) of $\ket{L}$ and $\ket{R}$ are shown in Figs.~\ref{Fig: iTEBD}(a) and \ref{Fig: iTEBD}(b).
\begin{figure}
    \centering
    \includegraphics[width=\linewidth]{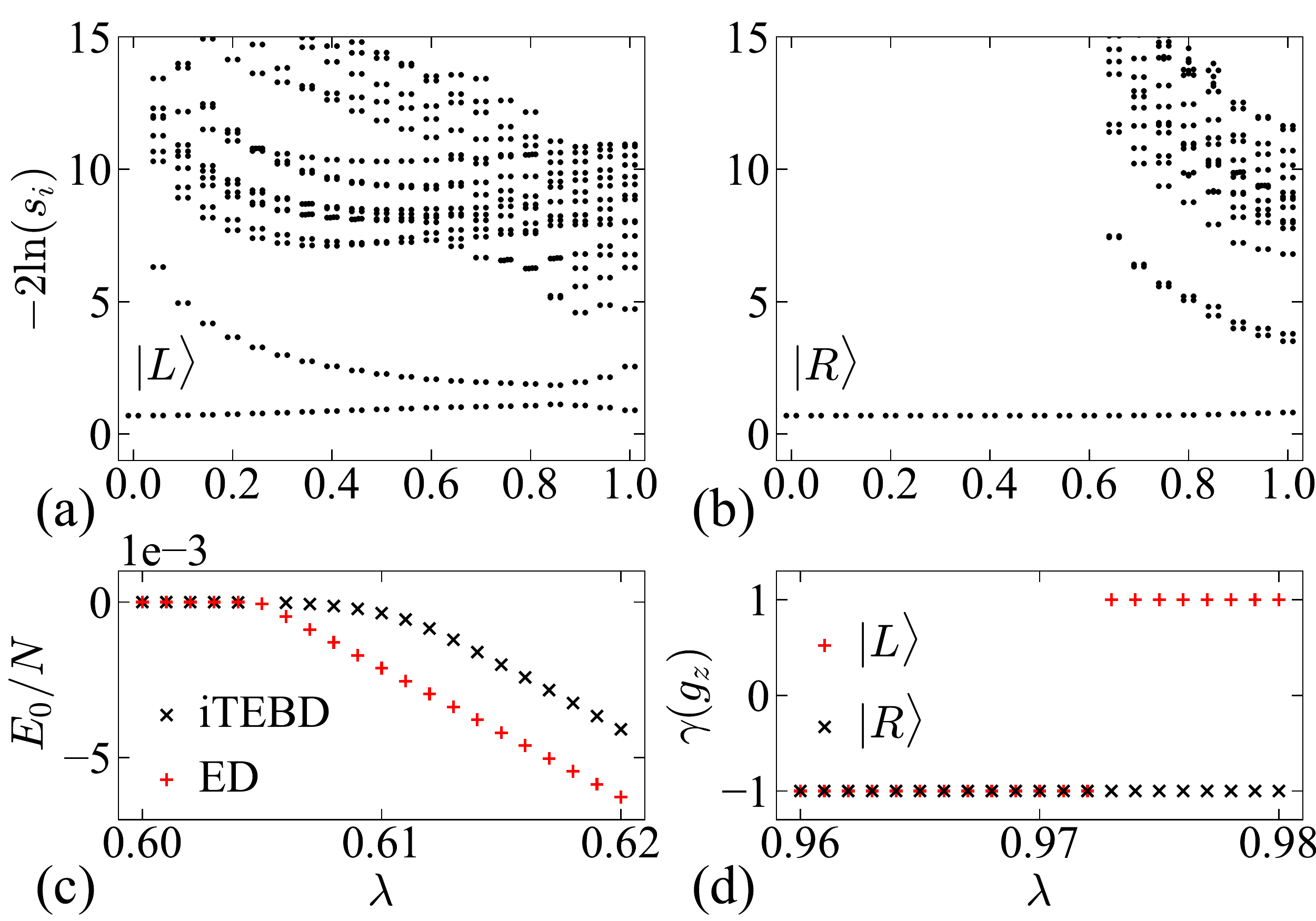}
    \caption{iTEBD calculation for $\ket{L}$ and $\ket{R}$ of $H_0(\lambda)$ with $D = 32$.
    (a and b) Entanglement spectrum of $\ket{L}$ and $\ket{R}$ respectively.
    (c) The ground state energy per site $E_0/N$ of $H_0(\lambda)$, compared with the results calculated from ED with $N = 10$ and PBC.
    (d) Index $\gamma(g_z)$ to describe the SPT order protected by $D_{2h}$ of $\ket{L}$ and $\ket{R}$.}
    \label{Fig: iTEBD}
\end{figure}

Firstly, it is observed that the ES of $\ket{R}$ undergoes an abrupt change when the parameter $\lambda$ approaches the critical value $\sim 0.6$.
For $\lambda\leq 0.6$, the ES of $\ket{R}$ coincides with that of $\ket{\psi_0}$. 
Consequently, $\ket{\psi_0}$ is deemed to be a zero mode as well as the true ground state of the Hamiltonian $H$ with associated energy $E_0 = 0$.
However, when $\lambda$ surpasses $0.6$, the ES of $\ket{R}$ retains its twofold degeneracy while exhibiting additional smaller Schmidt weights.
This result indicates that $\ket{R}$ also possesses non-trivial SPT order~\cite{Pollmann2010} despite the fact that $\ket{\psi_0}$ is no longer the ground state.
Combined with the ground state energy per site $E_0/N$ calculated by exact diagonalization (ED) for finite systems with $N = 10$ under periodic boundary conditions (PBC) and by iTEBD for infinite systems, which are real in the entire region as illustrated in Fig.~\ref{Fig: iTEBD}(c), we observe a first-order phase transition occurring at $\lambda_{c1} \approx 0.608$, where the ground state and the first excited state undergo a level crossing.
The origin of this phase transition can be attributed to the breakdown of the variational principle in non-Hermitian systems, whereby the summation of ground state energy of each term in the Hamiltonian (which equals zero in this scenario) cannot be utilized to provide a lower bound for the spectrum of the entire Hamiltonian.

Figures~\ref{Fig: iTEBD}(a) and \ref{Fig: iTEBD}(b) demonstrate that both $\ket{L}$ and $\ket{R}$ exhibit nontrivial SPT order throughout the path $\lambda\in[0, 1]$.
To identify the specific SPT phases to which they belong, and to further classify the non-Hermitian phase diagram of $H_0(\lambda)$, we calculate the index $\gamma(g_z)$ for $\ket{L}$ and $\ket{R}$ along the path.
It should be noted that the index differs for $\ket{\psi_0}$ [$\gamma(g_z) = -1$] and $\ket{\psi_x}$ [$\gamma(g_z) = 1$].
Details on the calculation method for this index can be found in the Appendix, while the results are presented in Fig.~\ref{Fig: iTEBD}(d).

Our analysis shows that $\gamma(g_z)=-1$ for $\ket{R}$ throughout the path $\lambda\in[0, 1]$, which is the same as that of $\ket{\psi_0}$.
On the other hand, $\gamma(g_z)$ of $\ket{L}$ transitions from $-1$ to $+1$ (the same as that of $\ket{\psi_x}$) at $\lambda_{c2}\approx 0.973$, describing a phase transition of $\ket{L}$ between different SPT phases.
As a result, $H_0(\lambda)$ for $\lambda>\lambda_{c2}$ exhibits a nontrivial CSPT order, where $\ket{L}$ and $\ket{R}$ belong to different SPT phases protected by $D_{2h}$.
In other words, the topological property of the ground state undergoes a qualitative change at $\lambda_{c2}$, suggesting a phase transition between the conventional Haldane phase and the newly discovered CSPT phase without gap closing.
This phase transition originates from the breakdown of the well-known Lieb-Robinson bound in non-Hermitian systems~\cite{Matsumoto2020}, where the Hermitian perturbation theory is not directly applicable.

\subsection{CSPT under perturbation and the phase diagram.}
The robustness of the CSPT phase constructed in the previous section against perturbations that preserve $D_{2h}$ symmetry is demonstrated in this section.
Specifically, we introduce an on-site potential energy term $U{S^z}^2$ to the Hamiltonian previously considered, i.e., $H(\lambda, U) = H_0(\lambda) + \sum_{i}U{S_i^z}^2$.
We begin with the case where $U\ll 1$, and we fix $\lambda=1$ while gradually increasing $U$.
Our numerical simulations indicate that the constructed CSPT phase, identified by different $\gamma(g_z)$ values for $\ket{L}$ and $\ket{R}$, is robust with the appearance of $U$ up to $0.1$.
This is shown in the rightmost column of the phase diagram depicted in Fig.~\ref{Fig: PhaseDiagram}(a), which will be discussed later.

\begin{figure}
    \centering
    \includegraphics[width=\linewidth]{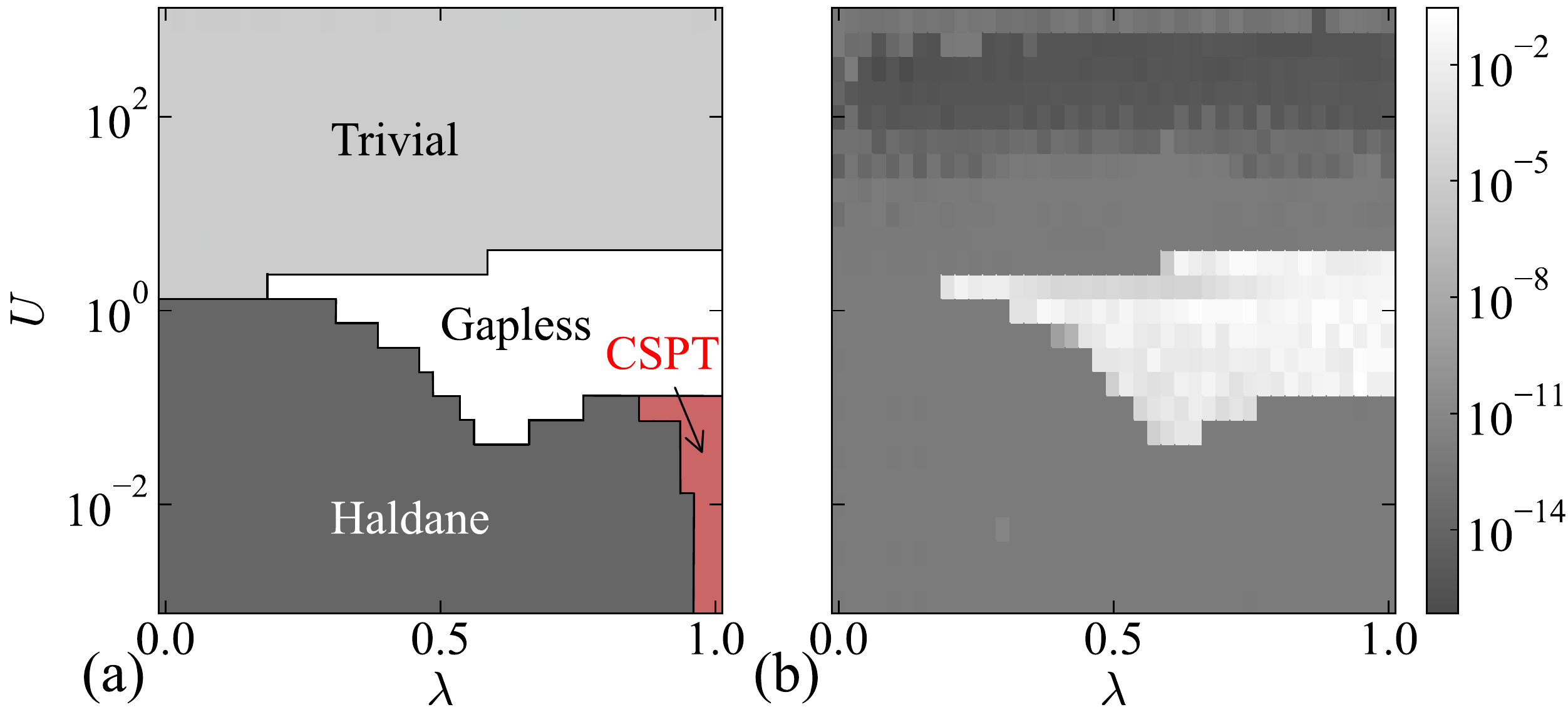}
    \caption{Phase diagram of $H(\lambda, U)$.
    (a) $RGB$ values assigned as $\left[\gamma(g_z)_{\ket{L}}, \gamma(g_z)_{\ket{R}}, \omega\right]/5+0.6$, respectively.
    (b) Residual error after $50000$ steps of iteration in the iTEBD method.}
    \label{Fig: PhaseDiagram}
\end{figure}

Another extreme limit is the onsite potential term dominating the Hamiltonian, i.e.,  $U\gg 1$, where both $\ket{L}$ and $\ket{R}$ can be adiabatically connected to product states.
This indicates that the Hamiltonian $H(\lambda, U)$ belongs to a trivial symmetric phase for large values of $U$.
To visually represent these three quantum phases in a phase diagram, we also consider the conventional index $\omega$ that distinguishes the Haldane phase from the trivial symmetric phase.
The trivial symmetric state corresponds to $\omega=1$, while all four SPT phases protected by $D_{2h}$ have $\omega=-1$.
Therefore, we can use $\omega$, $\gamma(g_z)_{\ket{L}}$, $\gamma(g_z)_{\ket{R}}$ as the joint indicator to identify the three phases as listed below.
\begin{align*}
    \begin{array}{ccccccc}\hline\hline
    \multirow{2}{*}{Phase of $H(\lambda, U)$} & \multicolumn{3}{c}{\textrm{Index of }\ket{L}}& \multicolumn{3}{c}{\textrm{Index of }\ket{R}}\\
    ~&\omega &\gamma(g_z) &\gamma(g_x) &\omega &\gamma(g_z) &\gamma(g_x)\\\hline
    \text{Trivial} & +1 & +1 & +1 & +1 & +1 & +1\\
    \text{Haldane} & -1 & -1 & -1 & -1 & -1 & -1\\
    \text{CSPT} & -1 & +1 & -1 & -1 & -1 & -1\\\hline\hline
    \end{array}
\end{align*}

In Fig.~\ref{Fig: PhaseDiagram}(a), we use three channels of $RGB$ to demonstrate three indices to be considered, where the values of indices are rescaled from $[-1, +1]$ to $[0.4, 0.8]$ as $RGB$ values, i.e., $[R, G, B] = \left[\gamma(g_z)_{\ket{L}}, \gamma(g_z)_{\ket{R}}, \omega\right]/5+0.6$.
Meanwhile, the residual error $e$ of iTEBD after $50000$ steps of iteration is shown in Fig.~\ref{Fig: PhaseDiagram}(b).
Only gapped quantum phases with convergent ES were considered, while the nonconvergent region was attributed to the gapless nature of the systems.
The phase diagram clearly demonstrates the existence of three gapped quantum phases of $H(\lambda, U)$ discussed above.
Notably, the range of $\lambda$ for the existence of CSPT is observed to increase with enhancing $U$, indicating that the newly-established composite quantum phases can exist extensively in non-Hermitian systems.
Another noteworthy observation in the phase diagram is the absence of a direct phase transition between the trivial symmetric phase and the CSPT phase.
Further investigation is required to determine whether such a phase transition can exist and the underlying reasons.

\section{Conclusions and discussions}
In this work, we clarify the definition of quantum phases and quantum phase transitions in non-Hermitian systems.
Specifically, we prove that if two local, line gapped, non-Hermitian Hamiltonians belong to the same quantum phase, their left and right ground states can be adiabatically connected respectively.
This holds true provided that the ground state manifold is short-range correlated and satisfies the entanglement area law.

Based on this definition, we propose a novel class of quantum phases in non-Hermitian systems, denoted as composite quantum phases, whose left and right ground states belong to different phases.
Furthermore, this definition can be extended to define the CSPT order subject to an additional symmetry restriction.

The recently proposed parent Hamiltonian method for non-Hermitian systems has enabled us to construct a system that can realize CSPT phases protected by the $D_{2h}$ symmetry group. 
Through numerical verification using the iTEBD algorithm, we demonstrate the existence of this type of new phase and investigate the phase diagram after introducing an on-site potential term that preserves symmetry.
Our results show that the CSPT phase is not only robust against symmetric perturbations but also has a substantial region of existence in our phase diagram.

This study provides a new perspective for the systematic understanding, classification, and construction of novel quantum phases in non-Hermitian systems.
Moreover, these composite quantum phases lack Hermitian counterparts, suggesting a vast field for exploration in non-Hermitian many-body physics.

\section{Acknowledgments}
This work is supported by the National Natural Science Foundation of China (NSFC) (Grants No. 12174214 and No. 92065205), the National Key R\&D Program of China (Grant No. 2018YFA0306504), the Innovation Program for Quantum Science and Technology (Grant No. 2021ZD0302100), and the Tsinghua University Initiative Scientific Research Program.

\section*{Appendix}
\renewcommand{\theequation}{S\arabic{equation}} \setcounter{equation}{0}
\renewcommand{\thefigure}{S\arabic{figure}} \setcounter{figure}{0}

\subsection{Density matrix and Expectation values for ground states}
In this work, the expectation value for a general non-Hermitian system at zero temperature is calculated as
\begin{align}
    \braket{O}_{LR} = \frac{\bra{L}O\ket{R}}{\braket{L|R}},
\end{align}
where $\ket{R}$ and $\ket{L}$ are the ground states of $H$ and $H^{\dagger}$, respectively, defined as the eigenstates with the lowest real parts of eigenvalues.
It also leads to the density matrix for ground states as $\rho = \ket{R}\hspace{-1mm}\bra{L}$ with proper normalization $\braket{L|R} = 1$.

The expectation value $\braket{O}_{LR}$ in the above formalism may generally be complex.
However, some previous studies adopted a similar formalism to calculate expectation values, where all their considered expectations were real and thus physically detectable and meaningful, including chiral order and string order~\cite{Shen2023} or the particle density~\cite{Lee2020}.
\citet{Chen2023} proposed a method to calculate the general spectral functions for non-Hermitian systems defined in the same formalism.
At the same time, the physical interpretation of $\braket{O}_{LR}$ for a generic operator $O$ was also been addressed~\cite{Brody2013}.
In addition, probing complex energy was reported in ion trap systems recently~\cite{Cao2023}, making the detection of other complex expectations also possible in experiments and providing them with real physical meanings.

\subsection{Non-Hermitian parent Hamiltonian}
In this section, we briefly introduce the non-Hermitian parent Hamiltonian (nH-PH) method proposed by~\citet{Shen2023}.
In this method, one starts from two MPS $\bra{L}$ and $\ket{R}$ with translational invariance (TI), given by
\begin{align}
    \ket{R(L)} = \sum\limits_{i_1,\dots,i_N} \mathrm{Tr}\left[A_{R(L)}^{[i_1]}\dots A_{R(L)}^{[i_N]}\right] \ket{i_1,\dots,i_N}.
\end{align}
After grouping $k$ adjacent tensors as shown in Figs.~\ref{Fig: S1}(a) and \ref{Fig: S1}(b), we need to construct a local projector onto the local support space of each side, i.e., 
\begin{align}
    \hat{P}\hat{T}_R = \hat{T}_R,\qquad \hat{T}_L^{\dagger}\hat{P} = \hat{T}_L^{\dagger}.\label{ProjectorCondition}
\end{align}
It is proved that the only solution is
\begin{align}
    \hat{P} = \hat{T}_R\hat{C}\hat{T}_L^{\dagger},\quad \hat{C} = \hat{G}^{-1},
\end{align}
where $\hat{G}\equiv \hat{T}_L^{\dagger}\hat{T}_R$ is the metric operator shown in Fig.~\ref{Fig: S1}(c).
Therefore, once $\hat{G}$ is invertible, one can define the $k$-local Hamiltonian $\hat{H}= \sum_i\hat{\Pi}_i = \sum_i \left(\hat{I}-\hat{P}_i\right)$ shown in Fig.~\ref{Fig: S1}(d) such that $\bra{L}$ and $\ket{R}$ serve as the zero-energy modes for each local term, and thus for the entire Hamiltonian.
\begin{figure}[H]
    \centering
    \includegraphics[width=\linewidth]{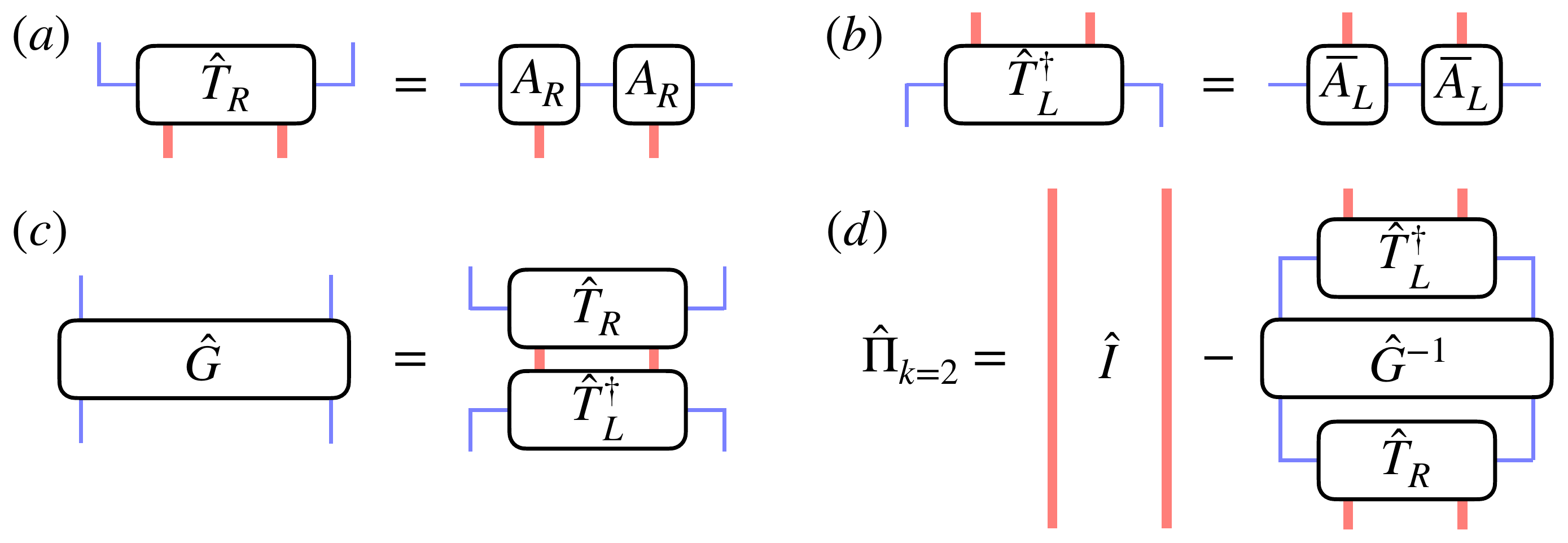}
    \caption{Schematic diagram of non-Hermitian parent Hamiltonian for $k=2$.
    (a and b) Local tensors $\hat{T}_R$ and $\hat{T}_L^{\dagger}$.
    (c) The metric operator $\hat{G} = \hat{T}_L^{\dagger}\hat{T}_R$.
    (d) The local projector $\hat{\Pi} = \hat{I} - \hat{T}_R\hat{C}\hat{T}_L^{\dagger} = \hat{I} - \hat{T}_R\hat{G}^{-1}\hat{T}_L^{\dagger}$.}
    \label{Fig: S1}
\end{figure}

\subsection{Calculation of the indices for SPT phases with combined symmetry}
Formally, the indices to identify different SPT phases can be evaluated as
\begin{align}
    \omega &= R(g_x)^{-1}R(g_z)^{-1}R(g_x)R(g_z)\label{equ: omega}\\
    \beta &= M\overline{M}\label{equ: beta}\\
    \gamma(g) &= \overline{R(g)}^{-1}M^{-1}R(g)M, \quad g\in D_2,\label{equ: gamma}
\end{align}
where the calculation of $\alpha(g)$, $R(g)$, and $M$ is fully discussed by \citet{Pollmann2012}.
However, there are still random global phases in the definition of $M$ and each $R(g)$, which do not affect $\omega$ and $\beta$ since they will be eliminated in Eqs.~\eqref{equ: omega} and \eqref{equ: beta} but must be taken into consideration when calculating $\gamma$ as Eq.~\eqref{equ: gamma} involves a complex conjugate.
To determine this phase factor, we choose the same phase structure as that of the AKLT state, i.e., $R(g)^2 = I$ for $g\in D_2$, which is consistent with all four states $\ket{\psi_0}$, $\ket{\psi_x}$, $\ket{\psi_y}$, $\ket{\psi_z}$ discussed in the main text.
\bibliography{ref}

\end{document}